\newtheorem{proposition}{Proposition}
\newtheorem{lemma}{Lemma}
\def\be{\begin{equation}}
\def\ee{\end{equation}}
\def\bea{\begin{eqnarray}}
\def\eea{\end{eqnarray}}
\def\H{{\cal H}}
\def\proj#1{\ket{#1}\!\bra{#1}}
\let\originalleft\left
  \let\originalright\right
\renewcommand{\left}{\mathopen{}\mathclose\bgroup\originalleft}
  \renewcommand{\right}{\aftergroup\egroup\originalright}
\newcommand{\id}{\mathbbm{1}}                %Identity
\newcommand{\tr}[1]{\operatorname{Tr}\left[ {#1} \right]} %Trace
\newcommand{\trx}[2]{\operatorname{Tr}_{#2}\left[ {#1} \right]} %Trace arbitrary
\newcommand{\ket}[1]{\left|#1 \right \rangle \vphantom{\left( #1 \right)^A}} %ket
\newcommand{\bra}[1]{\left\langle #1 \right | \vphantom{\left(#1\right)^A}} %bra
\begin{document}
\title{Non-thermal quantum channels as a thermodynamical resource}
\author{Miguel Navascu\'{e}s$^1$ and Luis Pedro Garc\'{i}a-Pintos$^2$}
\affiliation{$^1$Department of Physics, Bilkent University, Ankara 06800, Turkey\\
$^2$School of Mathematics, University of Bristol, University Walk, Bristol BS8 1TW, U.K.}
\begin{abstract}
Quantum thermodynamics can be understood as a resource theory, whereby thermal states are free and the only allowed operations are unitary transformations commuting with the total Hamiltonian of the system. Previous literature on the subject has just focused on transformations between different state resources, overlooking the fact that quantum operations which do not commute with the total energy also constitute a potentially valuable resource. In this Letter, given a number of non-thermal quantum channels, we study the problem of how to integrate them in a thermal engine so as to distill a maximum amount of work. We find that, in the limit of asymptotically many uses of each channel, the distillable work is an additive function of the considered channels, computable for both finite dimensional quantum operations and bosonic channels. We apply our results to bound the amount of distillable work due to the natural non-thermal processes postulated in the Ghirardi-Rimini-Weber (GRW) collapse model. We find that, although GRW theory predicts the possibility to extract work from the vacuum at no cost, the power which a \emph{collapse engine} could in principle generate is extremely low.

\end{abstract}

\maketitle

The field of quantum thermodynamics has seen a surge in interest in the past years, with increasing attention towards testing the validity of the rules of classical thermodynamics in the quantum regime. A major topic within thermodynamics is that of extracting work out of a given system and the optimal way to perform this. One way this has been approached in the quantum case was by considering it from the perspective of a resource theory.

The idea of a resource theory of thermodynamics is to assume one has unlimited access to thermal baths (i.e. Gibbs states of a fixed temperature $T$), the freedom to apply any energy conserving unitary on system plus the bath (any unitary $V$ that commutes with the total Hamiltonian), and the possibility of discarding part of the system or bath (i.e. apply partial traces). These rules are imported from classical thermodynamics, where one assumes access to infinite baths of constant temperature and any evolution where energy is conserved.

As a matter of fact, resource theories have been very useful in different topics within quantum information theory~\cite{Horodecki2003,Speckens2008,Horodecki2009,Muller13}. The idea is similar to above: considering free access to certain operations and/or states, any state and/or operation that is not in the above set can in principle be used as a resource.

This work complements previous research in quantum thermodynamics by accommodating the possibility of considering non-thermal maps, or channels, as a resource.

Physical operations are represented by \emph{quantum channels}, i.e., completely positive trace preserving maps $\Omega: B(\mathcal{H}) \rightarrow B(\mathcal{H}')$ acting on a state space $B(\mathcal{H})$. For simplicity, we will assume that the input and output spaces (and, as we will see later, Hamiltonians) of each channel are the same, although the results can be easily generalized. 
%to the case where $\mathcal{H}\not=\mathcal{H}'$.

Unitary evolution is a particular instance of a quantum channel, determined by the evolution operator. However, quantum channels allow to express more general evolutions. For instance, any unitary interaction of a system with an ancilla (or environment) generates a quantum channel, given by
\be
\Omega (\rho) = \trx{V \rho\otimes \sigma_A V^\dag}{A},
\ee
where $\sigma_A$ is the state of the ancilla, and $V$ is some unitary operator %(for instance, $V$ could be the evolution operator under some Hamiltonian). 
In fact, it can be shown that any channel can be generated via the above procedure~\cite{nielsenchuang2010}.

If, in the above expression, the ancilla is in a Gibbs state of temperature $T$ and the unitary $V$ commutes with the total Hamiltonian $H_T=H_S\otimes \id_A+\id_S\otimes H_A$ of the target-ancilla system, the resulting map is called a \emph{thermal channel}. These will be the free operations in our theory, while any non-thermal map $\Omega$ will be considered as a resource. 

In this scenario, we define quantum work $W$ as the process of exciting a two-level system with Hamiltonian $H=W\proj{1}$ from its ground state $\ket{0}$ to the excited state $\ket{1}$~\cite{Horodeckisingleshot}. Different authors have explored how much work one can extract from a non-thermal quantum state~\cite{Horodeckisingleshot,Brandao11,Takara2010,Esposito2011,Aberg13,Brandao2013,Popescu2014}. When we regard the \emph{maximum average work} as a figure of merit, a quantum generalization of the classical free energy naturally emerges~\cite{Horodeckisingleshot,Brandao11,Popescu2014}:
\be
F(\rho) = U(\rho) - K_BT S(\rho).
\ee
Here $\rho$ is the state of the system from which we wish to extract work; $U(\rho) = \tr{\rho H}$, its average energy%with respect to the Hamiltonian $H$
; and $S(\rho) = - \tr{\rho \log{\rho}}$, its von Neumann entropy. $K_B$ is Boltzmann's constant. The maximum amount of work one can extract (on average) from the state $\rho$ can then be shown to be $F(\rho)-F(\tau_{th})$, where $\tau_{th}$ represents the thermal state 
%of the system 
at temperature $T$.

In this Letter we want to address the following related problem: suppose we want to build a thermal engine, where we are allowed to integrate a number of non-thermal gates $\{\Omega_i\}_{i=1}^N$, each of which is assumed to act on a system with Hamiltonian $H_i$. More specifically, our machine can make free use of any amount of thermal states and operations, and we can invoke one use of each of the channels $\{\Omega_i\}_{i=1}^N$, in any order we want at any step. We are also allowed to use \emph{catalysts}, i.e., we can use any number of non-thermal states, as long as we return them in the end. Under these conditions, what is the maximum amount of work that our device can extract?

There are two ways to approach this problem:

\begin{enumerate}
\item
We can restrict to thermodynamical processes which distill work deterministically, i.e., always the same amount. The corresponding \emph{deterministic extractable work} can then be shown to behave very badly: not only is it not additive, but it can be super-activated. That is, there exist channels $\Omega$ such that no work can be distilled from a single use,
%with a single use of $\Omega$, 
but two uses of the channel can be combined to produce a non-zero amount of deterministic work (see the Supplemental Information).

\item
Alternatively, we can consider thermodynamical processes which generate a given amount of work with high probability. Here the figure of merit would be the maximum amount of work that can be distilled asymptotically (on average) when we have access to $n$ uses of each channel. 
\end{enumerate}

We will follow the second approach: in the next pages we will show that the \emph{asymptotically extractable work} is upper bounded by $\sum_{i=1}^N W(\Omega_i,H_i)$, where

\be
W(\Omega, H)\equiv \max_{\rho} \Delta F(\rho, \Omega),
\label{basic}
\ee

\noindent
with $\Delta F(\rho, \Omega)$ denoting the free energy difference between the states $\Omega(\rho)$ and $\rho$, i.e.
\be
\Delta F(\rho, \Omega)\equiv \tr{ (\Omega(\rho)-\rho )H} - K_BT\left[ S( \Omega(\rho))  - S(\rho)\right].
\ee
The quantity $W(\Omega, H)$ will be called the \emph{distillable work} of channel $\Omega$. From the inequality $\Delta F(\tau_{th}, \Omega)\geq 0$, it follows that $W(\Omega, H)\geq 0$ for any $\Omega$.

\noindent  The bound $\sum_{i=1}^N W(\Omega_i,H_i)$ can be achieved asymptotically via a simple protocol where we prepare suitable initial states $\sigma_{cat}$ (the catalysts) maximizing eq.~(\ref{basic}) for each channel, and then let each channel act over its corresponding maximizer. The result of this protocol will be a state with free energy $F(\sigma_{cat})+\sum_{i=1}^NW(\Omega_i,H_i)$. Given access to $n$ uses of each channel, we can thus prepare $n$ copies of the latter state, whose free energy can be converted to work via thermal operations using the protocol depicted in~\cite{Brandao11}. Following \cite{Brandao11}, part of this work (roughly $nF(\sigma_{cat})$) can then be used to regenerate the catalysts up to a small error~\footnote{Crucially, at the end of the regeneration step the free energy of the reconstructed catalysts also tends to its initial value.}. The average work extracted with this procedure (namely, the total work divided by $n$) is thus given by $\sum_{i=1}^N W(\Omega_i,H_i)$. 

Note, though, that, unless the catalysts are diagonal in the energy basis, an extra amount of coherence, sub-linear in $n$, may be needed to rebuild them (see Appendix E of \cite{Brandao11}). More specifically, for each energy transition $E_s\to E_t$ in the Hamiltonian $H_i$, the protocol proposed in \cite{Brandao11} requires a system with Hamiltonian $H^{s,t}_i= \sum_{k=0}^{O(m)} (E_s-E_t) k \proj{k}$ in state $\frac{1}{\sqrt{m}}\sum_{k=0}^m\ket{k}$, with $m$ sublinear in the number $n$ of uses of each channel. Like the catalyst states, at the end of the protocol such `coherent states' will be approximately rebuilt with vanishing error.

% $\ket{\psi^{s,t}_i}=\frac{1}{\sqrt{m}}\sum_{k=0}^m\ket{m}$ with $m$ sublinear in $n$, and Hamiltonian $H^{s,t}_i=\sum_{k=0}^m (E_i-E_j)\proj{m}$. .

%The quantity $W(\Omega, H)$ will be called the \emph{distillable work} of channel $\Omega$. \textbf{XXXXX}
%XXXXXXXXXXXX
%The quantity $W(\Omega, H)$ will be called the \emph{distillable work} of channel $\Omega$. \textbf{Note that the distillable work is always non-negative. Indeed, set $\rho=\tau_{th}$ in equation (\ref{basic}), where $\tau_{th}$ denotes the Gibbs state at temperature $T$. Then we have that $W(\Omega, H)\geq F(\Omega(\tau_{th}))-F(\tau_{th})\geq 0$.}
%XXXXXXXX

In order to prove the above result, and some later ones, the next lemma will be invoked extensively:

\begin{lemma}
\label{basic_form}
Let $\sigma^{(N)}$ be an $N$-partite quantum state, and let $\{\Omega_i\}_{i=1}^N$ be a collection of $N$ single-site quantum channels. Defining $\Omega_{1...N}\equiv\bigotimes_{i=1}^N\Omega_i$, we have that

\be
\sum_{i=1}^NS(\sigma_i)-S(\Omega_i(\sigma_i))\geq S(\sigma^{(N)})-S(\Omega_{1...N}(\sigma^{(N)})).
\label{indep_max}
\ee

\end{lemma}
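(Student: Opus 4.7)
The plan is to recognize the inequality as the statement that total (multipartite) correlations are non-increasing under a tensor product of local channels, and to prove it via the data processing inequality for quantum relative entropy.

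First I would rearrange the claim. Writing $\sigma_i = \operatorname{Tr}_{\{1,\dots,N\}\setminus\{i\}}\sigma^{(N)}$ for the marginals, and noting that $\Omega_{1\dots N}(\sigma^{(N)})$ has marginals $\Omega_i(\sigma_i)$, the inequality is equivalent to
\be
\sum_{i=1}^N S(\sigma_i) - S(\sigma^{(N)}) \;\geq\; \sum_{i=1}^N S\!\left(\Omega_i(\sigma_i)\right) - S\!\left(\Omega_{1\dots N}(\sigma^{(N)})\right).
\ee
The quantity $I(\rho^{(N)})\equiv \sum_i S(\rho_i) - S(\rho^{(N)})$ is the total correlation of the multipartite state $\rho^{(N)}$, so the claim reduces to $I(\sigma^{(N)})\geq I(\Omega_{1\dots N}(\sigma^{(N)}))$.

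Next I would express the total correlation as a relative entropy to the product of marginals. A direct computation gives
\be
I(\rho^{(N)}) = S\!\left(\rho^{(N)} \,\big\|\, \rho_1 \otimes \cdots \otimes \rho_N\right),
\ee
since $\log(\rho_1\otimes\cdots\otimes\rho_N) = \sum_i \id\otimes\cdots\otimes\log\rho_i\otimes\cdots\otimes\id$ and taking the trace against $\rho^{(N)}$ produces $\sum_i \operatorname{Tr}[\rho_i\log\rho_i]=-\sum_i S(\rho_i)$. Applying this identity to both sides, the inequality becomes
\be
S\!\left(\sigma^{(N)} \,\big\|\, \bigotimes_i \sigma_i\right) \;\geq\; S\!\left(\Omega_{1\dots N}(\sigma^{(N)}) \,\big\|\, \bigotimes_i \Omega_i(\sigma_i)\right).
\ee

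The final step is to invoke monotonicity of relative entropy under the CPTP map $\Omega_{1\dots N}=\bigotimes_i \Omega_i$, which sends $\sigma^{(N)}\mapsto \Omega_{1\dots N}(\sigma^{(N)})$ and $\bigotimes_i \sigma_i \mapsto \bigotimes_i \Omega_i(\sigma_i)$ (crucially, data processing holds for any CPTP map, including the product channel). This yields the required inequality. There is no real obstacle: the only nontrivial ingredient is the standard data processing inequality for quantum relative entropy, and the rest is the identification of $I$ with a relative entropy to the product of marginals, together with the observation that the marginals of $\Omega_{1\dots N}(\sigma^{(N)})$ are exactly the $\Omega_i(\sigma_i)$.
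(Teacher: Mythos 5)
Your proof is correct and follows exactly the route the paper intends: the authors prove the lemma by noting it is ``a straightforward application of the contractivity of the relative entropy,'' which is precisely your argument of writing both sides as the relative entropy of the state to the product of its marginals and invoking data processing under $\bigotimes_i\Omega_i$. Nothing to add.
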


\noindent The proof is a straightforward application of the contractivity of the relative entropy~\cite{Vedral2002}.

An almost immediate consequence of Lemma \ref{basic_form} is that $W(\Omega, H)$, as defined by eq.~(\ref{basic}), has the remarkable property of being additive. That is, if the bipartite system $12$ is described by the Hamiltonian $H_{12}=H_1\otimes \id_2+\id_1\otimes H_2$, and the channels $\Omega_1$ and $\Omega_2$ act on the respective Hilbert spaces $\H_1,\H_2$, then, $W(\Omega_1\otimes \Omega_2,H_{12})=W(\Omega_1,H_1)+W(\Omega_2,H_2)$.

Indeed, let $\Omega_{12}\equiv\Omega_1\otimes\Omega_2$ act on the bipartite state $\rho_{12}$. By choosing $\rho_{12}=\rho_1\otimes\rho_2$ in (\ref{basic}) we trivially have that $W(\Omega_1\otimes \Omega_2,H_{12})\geq W(\Omega_1,H_1)+W(\Omega_2,H_2)$, since maximizing over states in $12$ is more general than maximizing over $1$ and $2$ independently. Let us then focus on the opposite inequality. By Lemma \ref{basic_form}, we have
\be
\sum_{i=1,2}S(\rho_{i})-S(\Omega_i(\rho_{i}))\geq S(\rho_{12})-S(\Omega_{12}(\rho_{12})).
\ee

\noindent Substituting into~(\ref{basic}) gives
\be
\sum_{i=1,2}\Delta F(\rho_i,\Omega_i) \geq \Delta F(\rho_{12},\Omega_{12}) \quad \ \forall \  \rho_{12},
\ee
%for all $\rho_{12}$. 
It follows that $\sum_{i=1,2}W(\Omega_{i},H_i)\geq W(\Omega_{12},H_{12})$.

We are now ready to prove 
%our first result, namely, 
that $W(\Omega, H)$ quantifies the maximum (average) amount of work one can extract from channel $\Omega$.

\begin{proposition}
\label{max_work}
Let $\{\Omega_i\}_{i=1}^N$ be a set of quantum channels, defined over different quantum systems with Hamiltonians $\{H_i\}_{i=1}^N$. Suppose that we integrate $n$ uses of all such channels in a thermal engine ${\cal T}_n$ that produces a net amount of work $W_n$ with probability $1-\epsilon_n$. Let us further assume that  the probability of failure vanishes in the limit of large $n$, i.e., $\lim_{n\to\infty}\epsilon_n= 0$. Under these conditions, the average asymptotic work $\bar{W}\equiv\limsup\limits_{n\to\infty}\frac{W_n}{n}$ satisfies
\be
\bar{W}\leq \sum_{i=1}^N W(\Omega_i,H_i).
\label{boundido}
\ee

As indicated above, this bound is achievable with the use of catalysts and a sublinear amount of quantum coherence.
\end{proposition}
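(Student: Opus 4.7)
The plan is to track the total free energy $F$ of the joint state (work battery, catalysts, and all ancillas) across the protocol, using two facts: thermal operations never increase $F$, and each use of a non-thermal channel $\Omega_i$ can raise it by at most $W(\Omega_i,H_i)$. Summing over the $nN$ channel invocations yields $F_{\mathrm{final}}-F_{\mathrm{initial}}\leq n\sum_i W(\Omega_i,H_i)$, and an end-of-protocol accounting of the battery then converts this into the claimed bound on $\bar W$. Without loss of generality, I would represent ${\cal T}_n$ as an alternating sequence of thermal operations (absorbing all free use of thermal baths) and single channel applications, starting from an initial state of the form $\proj{0\cdots 0}_{\mathrm{work}}\otimes\sigma_{\mathrm{cat}}\otimes\tau_{\mathrm{th}}$. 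The non-increase of $F$ under thermal maps is standard, following from the contractivity of $D(\cdot\,\|\,\tau_{\mathrm{th}})$ together with $F(\rho)-F(\tau_{\mathrm{th}})=K_BT\,D(\rho\,\|\,\tau_{\mathrm{th}})$.

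The key step is the bound $\Delta F\leq W(\Omega_i,H_i)$ when $\Omega_i$ is applied to a subsystem $S$ (carrying its native Hamiltonian $H_i$) of a joint state $\rho_{SR}$ with global Hamiltonian $H_i\otimes\id_R+\id_S\otimes H_R$. The energy change of the joint state then depends only on the marginal $\rho_S$, and Lemma~\ref{basic_form}, specialised to $\Omega_1=\Omega_i$ and $\Omega_2=\id_R$, gives $S(\rho_S)-S(\Omega_i(\rho_S))\geq S(\rho_{SR})-S((\Omega_i\otimes\id_R)(\rho_{SR}))$, i.e.\ the joint entropy grows by at least as much as that of $S$ alone. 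Combining the two observations yields $\Delta F_{SR}\leq\Delta F_S\leq W(\Omega_i,H_i)$, where the last inequality is the definition~\eqref{basic}.

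To convert $F_{\mathrm{final}}\leq F_{\mathrm{initial}}+n\sum_i W(\Omega_i,H_i)$ into a work bound, I would split the final state into the battery $B$ and the rest $R$. Super-additivity of $F$ for non-interacting Hamiltonians (a direct consequence of entropy sub-additivity) gives $F_{\mathrm{final}}\geq F(\rho_B^f)+F(\rho_R^f)$; a Fannes-type continuity estimate applied to the hypothesis that the battery is excited with probability $1-\epsilon_n$ yields $F(\rho_B^f)\geq (1-\epsilon_n)W_n-K_BT\,\eta(\epsilon_n)$ with $\eta(\epsilon_n)\to 0$, while the catalytic constraint, together with the minimality of $F$ at thermal states, ensures $F(\rho_R^f)\geq F_{\mathrm{initial}}-\delta_n$ with $\delta_n=o(n)$. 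Dividing by $n$ and taking $\limsup_{n\to\infty}$, using $\epsilon_n\to 0$ and $W_n/n$ bounded, one arrives at $\bar W\leq\sum_i W(\Omega_i,H_i)$.

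The hard part will be the channel step: since the $\Omega_i$ may act on subsystems of highly correlated joint states generated by the preceding thermal operations, a bare data-processing argument on the marginal $\rho_S$ would not directly control $\Delta F_{SR}$, and Lemma~\ref{basic_form} is indispensable here. Beyond that, the main technicality is the bookkeeping of the catalytic reconstruction errors, which must remain sublinear in $n$; this is guaranteed by the protocol structure reviewed just below the statement.
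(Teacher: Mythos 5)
Your proposal is correct and follows essentially the same route as the paper's proof: track the total free energy, bound each channel invocation's contribution by $W(\Omega_i,H_i)$ via Lemma~\ref{basic_form} applied to $\Omega_i\otimes\id$ (the paper phrases this as additivity of the distillable work), and then use entropy subadditivity, exact or $o(n)$ catalyst regeneration, and a Fannes-type bound on the battery's free energy to conclude. The only cosmetic difference is that you make the non-increase of $F$ under general thermal operations explicit via contractivity of the relative entropy, where the paper only notes that energy-conserving unitaries preserve $F$.
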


\begin{proof}

In any protocol for work extraction, the initial state of the system will be given by the catalysts $\sigma_{cat}$, a number of thermal states $\tau_{th}$, and the work system in state $\ket{0}_{w}$. The initial state of the system is hence $\rho_0\equiv \sigma_{cat}\otimes\tau_{th}\otimes \proj{0}_{w}$, with free energy $F(\sigma_{cat})+F(\tau_{th})$.

Suppose that now we apply a sequence of energy-conserving unitaries. At time $t$, the state of the overall system is $\rho_t$, and we apply the channel $\Omega_{s(t)}$ over part of the whole system, possibly followed by some other thermal operation. Let us analyze how the free energy of $\rho_t$ can increase in the above step. Calling $H_T$ the Hamiltonian of the whole system, from the definition of $W(\Omega,H)$ and the additivity of the distillable work we have that:
\be
\Delta F(\Omega_{s(t)}\otimes\id,\rho_t)\leq W(\Omega_{s(t)}\otimes \id,H_T)=W(\Omega_{s(t)},H_{s(t)}).
\ee

Now, any intermediate energy-conserving unitary in-between the use of any two of the channels $\{\Omega_i\}_{i=1}^N$ will keep the free energy of the overall system constant. Calling $\bar{\rho}$ the state of the system at the end of the protocol, we hence have that

\be
F(\bar{\rho})\leq n\sum_{i=1}^NW(\Omega_i,H_i)+F(\sigma_{cat})+F(\tau_{th}).
\ee

\noindent From the subadditivity of the von Neumann entropy, it follows that $F(\bar{\rho})\geq F(\bar{\rho}_{cat})+F(\bar{\rho}_{th})+F(\bar{\rho}_{w})$, where $\bar{\sigma}_{cat},\bar{\rho}_{th},\bar{\rho}_{w}$ are, respectively, the reduced density matrices of the catalyst, thermal and work systems. 

At the end of the protocol, the catalyst must be regenerated, i.e., $\bar{\sigma}_{cat}=\sigma_{cat}$. Also, $F(\bar{\rho}_{th})\geq F(\tau_{th})$. It follows that the free energy of the work system is bounded by $n\sum_{i=1}^NW(\Omega_i,H_i)$.

This system is expected to end up in state $\ket{1}$ with probability $1-\epsilon_n$, i.e., $\bar{\rho}_w=(1-\epsilon_n)\proj{1}+\epsilon_n \bar{\sigma}$. It follows that $F(\bar{\rho}_{w})\geq (1-\epsilon_n)W_n-K_BTh(\epsilon_n)$, with $h(p)=-p\ln(p)-(1-p)\ln(1-p)$. In the asymptotic limit, with $n\to \infty$, $\epsilon_n\to 0$, the average asymptotic work $\limsup\limits_{n\to\infty}\frac{W_n}{n}$ is hence bounded by $\sum_{i=1}^NW(\Omega_i,H_i)$.

Note that this bound also holds if the catalysts are recovered up to an error, as long as $F(\sigma_{cat})-F(\bar{\sigma}_{cat})\leq o(n)$.

\end{proof}

This result allows to quantify the work extraction capabilities of different channels. One can check, for instance, that no work can be distilled from a dephasing channel. Meanwhile, for a two-level system with Hamiltonian $H = E\ket{1}\bra{1},\ E>0$, the channel that takes any state to the excited state $\ket{1}$ provides the highest distillable work.

\vspace{10pt}
\noindent \emph{Gaussian channels}
\vspace{10pt}

If our target system is infinite dimensional, in principle there may exist quantum states possessing an infinite amount of energy. If we regard such states as unphysical, we should replace the maximization in eq.~(\ref{basic}) by an optimization over all states of finite energy. The resulting quantity will hence bound the maximum amount of work generated in physically conceivable quantum engines, where the overall state of the system always has a finite amount of energy.
%(that is, over physically conceivable thermal engines).

In infinite dimensional systems Gaussian quantum channels have a special relevance: they are easy to implement in the lab, and are extensively used to model particle interactions with a macroscopic environment. They are defined as channels which, when composed with the identity map, transform Gaussian states into Gaussian states, the latter being those states with a Gaussian Wigner function~\cite{gaussianos}. An $m$-mode Gaussian state is completely defined via its displacement vector $d_i=\langle R_i\rangle$ and covariance matrix $\gamma_{ij}=\langle\{R_i - d_i \id,R_j - d_j \id\}_{+}\rangle$, where $(R_1,R_2,...,R_{2m})\equiv (Q_1,P_1,...,Q_m,P_m)$ are the optical quadratures. The action of a Gaussian channel is fully specified by its action over the displacement vector and covariance matrix, given by:

\be
d\to Xd+z, \gamma\to X\gamma X^T+Y,
\label{gauss_chan}
\ee

\noindent where $Y+i\sigma-iX^T\sigma X\geq 0$. Here $\sigma$ denotes the symplectic form $\sigma=\oplus_{i=1}^m\left(\begin{array}{cc}0&1\\-1&0\end{array}\right)$. If the Hamiltonian of the system under study happens to be a quadratic function of the optical quadratures, i.e., $H=\vec{R}^TG\vec{R}+\vec{h}\cdot \vec{R}$, for some real symmetric matrix $G$ and real vector $\vec{h}$, then the average energy of a state with displacement vector $\vec{d}$ and covariance matrix $\gamma$ is given by $E=\frac{1}{2}\tr{G \gamma}+\vec{d}^TG\vec{d}+\vec{h}\cdot \vec{d}$. States with finite energy hence correspond to states with finite first and second moments.

When the quadratic Hamiltonian has no zero energy modes (that is, when $G>0$), Proposition \ref{max_work} allows to easily classify generic Gaussian channels according to their capacity to generate an infinite amount of work. Indeed, for $X^TGX-G\not\leq 0$, the channel's distillable work is unbounded: this can be seen by inputting a sequence of Gaussian states with constant covariance matrix but increasing displacement vector parallel to any positive eigenvector of $X^TGX-G$. Conversely, as we show in the Supplemental Information, for channels satisfying $X^TGX-G < 0$ only a finite amount of work can be distilled.

For such channels there is still the dilemma of how much work can be extracted. The next Proposition greatly simplifies this problem by showing that, for Gaussian channels $\Omega$, the maximization in (\ref{basic}) can be restricted to Gaussian states:

%XXXXXXXXXXXXX
%When the quadratic Hamiltonian has no zero energy modes (that is, when $G>0$), Proposition \ref{max_work} allows to easily classify Gaussian channels according to their capacity to generate an infinite amount of work. Indeed, consider eq.~(\ref{gauss_chan}). For generic channels, there are two possibilities:
%
%\begin{enumerate}
%\item
%$X^TGX-G\not\leq 0$, in which case the channel allows to distill an infinite amount of work. This can be seen by inputting a sequence of Gaussian states with constant covariance matrix but increasing displacement vector parallel to any positive eigenvector of $X^TGX-G$.
%\item
%$X^TGX-G<0$, in which case the channel will just allow to distill a finite amount of work. This is a consequence of the fact that, for a given energy $E_0$ of the initial state, the distillable work of the channel is bounded by an expression of the form $O(\log(E_0))-O(E_0)$, see the Supplemental Information for details.
%
%\end{enumerate}
%
%For Gaussian channels with finite distillable work, there is still the dilemma of how much work we can exactly extract from them. The next Proposition greatly simplifies this problem by showing that, for Gaussian channels $\Omega$, the maximization carried out in (\ref{basic}) can be restricted to Gaussian states:
% XXXXXXXXXX

\begin{proposition}
\label{Gaussian}
Consider a continuous variable quantum system of $m$ modes, with quadratic Hamiltonian $H$, let $\Omega$ be a Gaussian channel mapping $m$ modes to $m$ modes, and denote by ${\cal G}$ the set of all $m$-mode Gaussian states. Then,
\be
W(\Omega,H)= \max_{\rho\in {\cal G}} \Delta F(\rho,\Omega).
\ee
\end{proposition}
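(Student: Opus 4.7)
The plan is to show that for any finite-energy input $\rho$, the Gaussian state $\rho_G$ with the same first and second moments as $\rho$ satisfies $\Delta F(\rho_G,\Omega)\geq \Delta F(\rho,\Omega)$; taking $\rho$ to be a maximizer in~(\ref{basic}) then yields the proposition. The argument relies on three ingredients: (i) the moment-transformation rule~(\ref{gauss_chan}) is universal, i.e., it also describes the action of $\Omega$ on non-Gaussian inputs, as can be read off the Stinespring dilation by a symplectic unitary acting on the system plus a Gaussian ancilla; (ii) the quadratic Hamiltonian $H$ has an expectation value determined by the first and second moments of the state; and (iii) the logarithm of a Gaussian state with non-singular covariance matrix is a quadratic operator in the quadratures.

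First I would handle the energy term. By (i) the state $\Omega(\rho_G)$ is Gaussian and shares its first and second moments with $\Omega(\rho)$; by (ii) this gives $\tr{(\Omega(\rho)-\rho)H}=\tr{(\Omega(\rho_G)-\rho_G)H}$, so the energetic contribution to $\Delta F$ is invariant under the Gaussianization $\rho\mapsto\rho_G$.

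The core of the proof is the entropy comparison. Using (iii), for any Gaussian $\tau$ with non-singular covariance matrix, $\tr{\sigma\log\tau}$ depends on $\sigma$ only through its first and second moments. Choosing $\tau=\rho_G,\ \sigma=\rho$ gives $\tr{\rho\log\rho_G}=\tr{\rho_G\log\rho_G}=-S(\rho_G)$, so the relative entropy satisfies $D(\rho\|\rho_G)=S(\rho_G)-S(\rho)$; the same manipulation with $\tau=\Omega(\rho_G),\ \sigma=\Omega(\rho)$ yields $D(\Omega(\rho)\|\Omega(\rho_G))=S(\Omega(\rho_G))-S(\Omega(\rho))$. Monotonicity of the relative entropy under the action of $\Omega$ then implies
\be
S(\rho_G)-S(\rho)\;\geq\;S(\Omega(\rho_G))-S(\Omega(\rho)),
\ee
which, combined with the invariance of the energy term, is precisely $\Delta F(\rho_G,\Omega)\geq \Delta F(\rho,\Omega)$.

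The main obstacle I anticipate is the rigorous handling of the infinite-dimensional traces: $\log\rho_G$ is an unbounded quadratic operator, so one must check that $\tr{\rho\log\rho_G}$ converges — which follows from the finite-energy restriction, since it forces the second moments of $\rho$ to be finite — and treat by a limiting/regularization argument the degenerate cases in which the covariance matrix is singular or the maximum in~(\ref{basic}) is only approached rather than attained. The existence of $\rho_G$ itself is automatic, since $\rho$ being a valid quantum state forces $\gamma+i\sigma\geq 0$.
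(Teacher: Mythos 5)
Your proposal is correct, but it reaches the key entropy inequality $S(\rho_G)-S(\Omega(\rho_G))\geq S(\rho)-S(\Omega(\rho))$ by a genuinely different route than the paper. The paper proves this as a standalone lemma via the asymptotic Gaussification protocol of \cite{gaussification}: it applies the Gaussification unitary $U_n$ to $\rho^{\otimes n}$, invokes Lemma~\ref{basic_form} to upper-bound the $n$-partite entropy change by the sum of single-site changes, commutes $U_n$ through $\Omega^{\otimes n}$ using the symplectic dilation of the channel, and then takes $n\to\infty$ using trace-norm convergence of the reduced states to $\rho_G$ and continuity of the entropy on sets of bounded second moments. You instead give a direct single-copy argument: the identity $D(\rho\,\|\,\rho_G)=S(\rho_G)-S(\rho)$ (valid because $\log\rho_G$ is quadratic in the quadratures and $\rho,\rho_G$ share first and second moments), the analogous identity for the output states, and the data-processing inequality $D(\Omega(\rho)\,\|\,\Omega(\rho_G))\leq D(\rho\,\|\,\rho_G)$. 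Both arguments ultimately rest on monotonicity of the relative entropy, but yours dispenses entirely with the $n$-copy machinery, the commutation relation $[V_S^{\otimes n},U_n']=0$, and the entropy-continuity step, at the price of having to control the trace of the unbounded operator $\log\rho_G$ against $\rho$ and to regularize singular covariance matrices --- issues you correctly flag and which are indeed resolvable under the finite-energy restriction. The treatment of the energy term (invariance of $\tr{(\Omega(\rho)-\rho)H}$ under $\rho\mapsto\rho_G$, using that the moment-transformation rule of a Gaussian channel applies to arbitrary inputs) coincides with the paper's. Your version is arguably the cleaner and more self-contained of the two; the paper's has the advantage of outsourcing all infinite-dimensional subtleties to the cited Gaussification result.
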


\noindent The proposition can be proven by combining Lemma \ref{basic_form} with the `gaussification' protocol described in \cite{gaussification}, see the Supplemental Information.

Since $W(\Omega,H)$ just involves an optimization over a finite set of parameters subject to positive semidefinite constraints, (in principle) it can be computed exactly for any Gaussian channel $\Omega$.

\vspace{10pt}
\noindent \emph{One application: collapse engines}
\vspace{10pt}

In order to address the measurement problem~\cite{GRW}, and, independently, the decoherence effects that a quantum theory of gravity could impose on the wave-function~\cite{penrose,diosi}, different authors have proposed that \emph{closed} quantum systems should evolve according to the Lindblad equation

\be
\label{GRWapprox}
\frac{d}{dt} \rho_t = -\frac{i}{\hbar} [H,\rho_t] - \frac{\Lambda}{4} [X,[X,\rho_t]],
\ee
where $X$ is the position operator for the particle considered. 

The effect of the non-unitary term is a suppression of coherences in the position basis, effectively destroying quantum superpositions. The value of the constant $\Lambda$, which can be interpreted as the rate at which this localization process occurs, depends on the particular theory invoked to justify eq. (\ref{GRWapprox}). In the Ghirardi-Rhimini-Weber (GRW) theory~\cite{GRW}, the localization process is postulated to solve the measurement problem in quantum mechanics. To achieve this goal and avoid contradictions with past experimental results, $\Lambda$ must be roughly between $10^{-2} s^{-1} m^{-2} \ \text{and} \ 10^{6} s^{-1} m^{-2}$, according to latest estimations \cite{Adler07}.

Note that the above evolution is non-thermal. Hence, it could be used in principle to extract work from nothingness by means of a suitable thermal engine. 
We will call such a hypothetical device a \emph{collapse engine}.

To connect this to our previous setting, notice that the evolution equation (\ref{GRWapprox}) defines a quantum channel
\be
\Omega_{\delta t}(\rho_t) = \rho_{t+\delta t},
\ee
with $\rho_t$ the solution of eq.~(\ref{GRWapprox}) for the initial state $\rho_0$.

We suppose that the particle under consideration is subject to a harmonic potential, i.e., $H = \frac{m \omega^2}{2} X^2 + \frac{1}{2m}P^2$, and that, despite the GRW dynamics, the bath's temperature $T$ is constant. A physical justification for this last assumption is that the temperature-increasing GRW dynamics is countered by radiation from the bath into outer space. Hence, as a function of time, the temperature will converge to a stationary value $T=T_{eq}$ above the temperature of the cosmic microwave background (CMB)~\footnote{If we model the bath as a grey body, then it radiates energy at a rate of $\sigma (T^4-T_c^4)$, where $T_c$ is the temperature of the CMB and $\sigma$ is a constant that depends on how well isolated the bath is. The power transferred to the bath by the GRW dynamics is, on the other hand, independent of $T$ and proportional to $\Lambda$. It follows that the bath will reach a stationary temperature $T_{eq}$ whose exact value will depend on both $\sigma$ and $\Lambda$.}. Finally, we suppose that our set of resource operations remains the same: that is, in spite of the modified Schr\"{o}dinger equation (\ref{GRWapprox}), we can still switch on and off any unitary interaction that commutes with the total energy of the system. Notice that the GRW dynamics can be modelled by an open system approach, where the particle is interacting with some unknown system such that the resulting evolution is given by (\ref{GRWapprox}). From this viewpoint, we are simply assuming that we still have the capacity to interact with the system in the usual way.

In these conditions, we wish to find the maximum amount of work that a collapse engine could extract if it had access to the evolution equation (\ref{GRWapprox}) for a finite amount of time $t$. From Proposition (\ref{max_work}), this amounts to computing $\lim_{\delta t\to 0}\frac{t}{\delta t}W(\Omega_{\delta t}, H)$.

First, notice that we can (reversibly) evolve the system with the Hamiltonians $H$ or $-H$, since this corresponds to a thermal operation. This implies that we can ignore the first term in the right hand side of (\ref{GRWapprox}), and what remains is a Gaussian channel given by
%XXXXXXXX
%This can be determined by studying the effect of the evolution term on the Wigner function of a Gaussian state, which will experience the transformation:
%XXXXXXXXX
\be
d\to d,\gamma\to\gamma+\left(\begin{array}{cc}0&0\\0&\frac{\hbar^2\Lambda \delta t}{2}\end{array}\right).
\ee

It follows that the energy of any input state will increase by $\Delta U\equiv\frac{\hbar^2\Lambda}{4m} \delta t$. From Proposition~\ref{Gaussian}, we can estimate the entropy increase by just considering Gaussian states. Now, the entropy of a $1$-mode Gaussian state is an increasing function of the determinant of its covariance matrix~\cite{gaussianos}, which, by the above equation, can only increase with time. Hence, $W(\Omega_{\delta t},H)\leq\Delta U$.

On the other hand, suppose we input a squeezed state with $\gamma=\mbox{diag}(1/r,\hbar^2r)$. Then the determinants of the covariance matrices of initial and final states will be $\hbar^2$ and $\hbar^2+\frac{\hbar^2\Lambda\delta t}{2r}$, respectively.
The entropy change of the state can thus be made as small as desired by increasing the value of $r$, and so the above bound can be saturated, leading to $W(\Omega_{\delta t},H)=\Delta U$. Consequently, the maximum power at which a collapse engine could in principle operate is given by
\be
\frac{dW}{dt}=\frac{\hbar^2\Lambda}{4m}.
\ee
Using the upper range estimation $\Lambda \sim 10^{6} s^{-1} m^{-2}$, we have that a collapse engine powered by a single electron would produce $\frac{dW}{dt} \sim 10^{-32} \ watt$. Assuming total control over the electrons of a macroscopic sample, %this last result implies that 
one would need a kiloton of Hydrogen to power a $40 \ watt$ light bulb. 

\vspace{10pt}
\noindent\emph{Conclusion}
\vspace{10pt}

In this Letter we addressed the problem of determining how much work can be extracted from operational -as opposed to state- resources. We proved that the solution to this problem in the asymptotic limit is given by a single-letter formula that quantifies the amount of distillable work that a channel can, in principle, generate when supplemented with thermal operations and catalyst states. Moreover, we found how this quantity can be determined for bosonic channels, and computed it exactly for the case of the GRW dynamics, hence determining the maximum power which a hypothetical collapse engine could provide for free.

Note that we have only studied operational resources regarding their capacity to generate work. An interesting topic for future research is to extend our results and draw a map of the inter-conversion relations between different operational resources. In the case of state resources there is a unique monotonic quantity, the free energy, determining the optimal rates for state transformations~\cite{Brandao11}. In this work we have identified an operational monotone, the distillable work, but we suspect that there may be many others.

\vspace{10pt}

\noindent\emph{Acknowledgements}

We thank Ralph Silva and Noah Linden for useful discussions.

\appendix

\section{The deterministic distillable work can be superactivated} 

Consider the channel $\Omega$ that takes any state of a target two-level system with Hamiltonian $H= E_1\proj{1}$ to the state $\ket{\psi} \propto \left( \ket{0} + e^{-\beta E_1/2} \ket{1} \right)$. That is, $\Omega(\rho)=\proj{\psi}$ for all $\rho$.

Now, any protocol that pretends to extract work out via a single use of $\Omega$ can be divided in three steps:

\begin{enumerate}
\item
\label{prep}
The system is prepared in a state which comprises catalysts, thermal states and the work system (in state $\ket{0}$). That is, $\rho_0=\sigma_{cat}\otimes\tau_{th}\otimes\proj{0}_w$.
\item
We apply an energy-conserving unitary $U_1$ over the whole system.
\item
\label{channel}
We apply $\Omega$, hence replacing a subsystem's state by $\ket{\Psi}$.
\item
We apply a second energy-conserving unitary $U_2$ over the whole system.
\end{enumerate}

\noindent At the end of the protocol, the work system will have evolved to $\ket{1}$.

If, rather than implementing step \ref{channel}, we add the state $\ket{\psi}$ in the preparation stage, then it is trivial to find an energy-conserving unitary $\tilde{U}_2$ that at the last step would produce exactly the same amount of work. That is, we would have extracted work from the resource state $\ket{\psi}$. 

In~\cite{Horodeckisingleshot}, however, it is shown that no work can be extracted from a single copy of $\ket{\psi}$, even with the use of catalysts. This implies that the deterministic distillable work of a single use of $\Omega$ is zero. 

Suppose, now, that we have access to two uses of $\Omega$. Then we can prepare two copies of $\ket{\psi}$, from which a non-zero amount of work can be deterministically extracted via thermal operations \cite{Horodeckisingleshot}.

\section{Gaussian channels with finite distillable work} 

Let $\Omega$ be a Gaussian channel whose action on the displacement vector $\vec{d}$ and covariance matrix $\gamma$ of the $m$-mode input state is given by

\be
d\to Xd+z, \gamma\to X\gamma X^T+Y.
\label{gauss_chan}
\ee

\noindent Suppose that $\Omega$ acts on a system with Hamiltonian $H=\vec{R}^TG\vec{R}+\vec{h}\cdot\vec{R}$, where $\vec{R}$ is the vector of optical quadratures. Under the assumption that $\tilde{G}\equiv G-X^TGX>0$, we wonder if the difference between the free energies of the input and output states is bounded, i.e., whether $\Delta F(\rho,\Omega)<K$, for some $K<\infty$.

Call $E_0$ the energy of the input state; and $\gamma, \vec{d}$, its covariance matrix and displacement vector. Then we have that 

\be
E_0=\frac{1}{2}\tr{G\gamma}+(\vec{d}-\vec{d}_0)^TG(\vec{d}-\vec{d}_0)-\bar{E},
\ee

\noindent where $\vec{d}_0\equiv -G^{-1}\vec{h}/2$, and $\bar{E} \equiv \vec{d}_0^TG\vec{d}_0$. Hence, defining $\mu_{\min}>0$ ($\mu_{\max}>0$) to be the minimum (maximum) eigenvalue of $G$ we have that

\be
\frac{E_0+\bar{E}}{\mu_{\max}}\leq\left(\frac{1}{2}\tr{\gamma}+\|\vec{d}-\vec{d}_0\|^2\right)\leq \frac{E_0+\bar{E}}{\mu_{\min}},
\ee

\noindent and consequently

\begin{align}
&\tr{\gamma}\leq O(E_0),\|\vec{d}\|\leq O(\sqrt{E_0}),\nonumber\\
&\frac{1}{2}\tr{\gamma}+\|\vec{d}-\vec{d}_0\|^2\geq O(E_0).
\label{lower}
\end{align}

We can now bound the energy difference between the input and output states. First, note that $\Delta E\equiv E_f-E_0$ can be written as:

\be
\Delta E=-\frac{1}{2}\tr{\gamma \tilde{G}}-(\vec{d}-\vec{d}_0)^T\tilde{G}(\vec{d}-\vec{d}_0)+O(\vec{d}).
\ee

\noindent Defining $\lambda_{\min}>0$ to be the smallest eigenvalue of $\tilde{G}$, we thus arrive at

\begin{align}
\Delta E&&\leq -\lambda_{\min}\left(\frac{1}{2}\tr{\gamma}+\|\vec{d}-\vec{d}_0)\|^2\right)+O(\vec{d})\nonumber\\
&&\leq-O(E_0)+O(\sqrt{E_0})=-O(E_0).
\end{align}

Let us now bound the entropy of the input state: by the subadditivity of the von Neumann entropy, $S(\rho)$ is bounded from above by $\sum_{i=1}^mS(\rho_i)$, where $\rho_{i}$ denotes the reduced density matrix of each mode $i$. $S(\rho_i)$, in turn, is bounded by the von Neumman entropy of the Gaussian state with the same first and second moments as $\rho_i$, i.e., a Gaussian state with covariance matrix $\gamma_i$. Note that $\sum_{i=1}^m\tr{\gamma_i}=\tr{\gamma}\leq O(E_0)$, where the last inequality is due to eq. (\ref{lower}). In particular, $\tr{\gamma_i}\leq O(E_0)$ for $i=1,...,m$. 

\noindent The entropy of a 1-mode Gaussian state with covariance matrix $\tilde{\gamma}$ is given by

\be
(N+1)\log(N+1)-N\log(N)= O(\log(N)),
\ee

\noindent where $N=\sqrt{\det(\tilde{\gamma})}/\hbar\leq \tr{\tilde{\gamma}}/2\hbar$ (this last inequality reflects the fact that the geometric mean of $\tilde{\gamma}$'s two eigenvalues is smaller than their arithmetic mean). It follows that

\be
S(\rho)\leq O(\log(E_0)).
\ee

Putting all together, we have that

\begin{align}
\Delta F(\rho, \Omega) &\leq \Delta E+K_BTS(\rho)&\nonumber\\
&\leq -O(E_0)+O(\log(E_0)).&
\end{align}

The last expression cannot thus take arbitrarily large values, and so the distillable work of channel $\Omega$ is bounded.

\section{Proof of Proposition 2}
Proposition 2 follows straightforwardly from the following Lemma:

\begin{lemma}
Let $\rho$ be an arbitrary state with finite first and second moments, and let $\rho_G$ be the unique Gaussian state with the same first and second moments. Then, for any Gaussian channel $\Omega$, we have that

\be
S(\rho_G)-S(\Omega(\rho_G))\geq S(\rho)-S(\Omega(\rho)).
\label{paramecio}
\ee

\end{lemma}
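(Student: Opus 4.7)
The plan is to combine Lemma~\ref{basic_form} with the gaussification protocol of \cite{gaussification}, which from $n$ copies of an arbitrary state $\rho$ produces, via passive Gaussian operations alone, a state whose single-copy marginals converge to the Gaussian state $\rho_G$ sharing $\rho$'s first and second moments. Let $U_n$ denote the passive Gaussian unitary on $n$ copies that implements (a permutation-symmetrized version of) this protocol, and set $\sigma^{(n)}\equiv U_n\rho^{\otimes n} U_n^\dagger$. By the symmetry of the construction, the $n$ single-copy marginals of $\sigma^{(n)}$ coincide with a common state $\tau_n$, with $\tau_n\to\rho_G$ as $n\to\infty$. Crucially, each $\tau_n$ shares its first and second moments with $\rho_G$, so the entire sequence $\{\tau_n\}$ (and $\{\Omega(\tau_n)\}$) has uniformly bounded mean energy for the reference quadratic Hamiltonian.

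Next I would verify that passive Gaussian unitaries commute with any tensor power of a Gaussian channel, $\Omega^{\otimes n}\circ\mathcal{U}_n = \mathcal{U}_n\circ\Omega^{\otimes n}$, where $\mathcal{U}_n(\cdot)=U_n\cdot U_n^\dagger$. At the symplectic level, $\mathcal{U}_n$ corresponds to an orthogonal symplectic matrix of the form $O\otimes I_{2m}$ that mixes copies without touching intra-copy quadratures, while $\Omega^{\otimes n}$ acts as $I_n\otimes X$ on displacements and adds block-diagonal noise $I_n\otimes Y$; these two factors commute, and the orthogonality of $O$ leaves $I_n\otimes Y$ invariant under conjugation. (A nonzero displacement shift $z$ of $\Omega$ can be stripped off beforehand by a free unitary displacement.) Consequently, $S(\sigma^{(n)}) = S(\rho^{\otimes n}) = n\,S(\rho)$ and $S(\Omega^{\otimes n}(\sigma^{(n)})) = S(U_n\Omega^{\otimes n}(\rho^{\otimes n})U_n^\dagger) = n\,S(\Omega(\rho))$.

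Applying Lemma~\ref{basic_form} to $\sigma^{(n)}$ with the $n$ identical channels $\Omega$ and using that all marginals equal $\tau_n$ gives
\be
n\bigl[S(\tau_n)-S(\Omega(\tau_n))\bigr] \;\geq\; n\bigl[S(\rho)-S(\Omega(\rho))\bigr].
\ee
Dividing by $n$ and letting $n\to\infty$ then yields (\ref{paramecio}).

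The main technical obstacle I expect is the exchange of limit and entropy in this last step, since in infinite dimension the von Neumann entropy is only lower semicontinuous under trace-norm convergence. This is overcome using the uniform energy bound established in the first paragraph: because the gaussification protocol preserves first and second moments, both $\{\tau_n\}$ and $\{\Omega(\tau_n)\}$ sit inside an energy-constrained subset of states, and standard continuity results for the von Neumann entropy on such sets (e.g.\ Shirokov's continuity bound for quadratic Hamiltonians with $G>0$) guarantee $S(\tau_n)\to S(\rho_G)$ and $S(\Omega(\tau_n))\to S(\Omega(\rho_G))$, closing the argument.
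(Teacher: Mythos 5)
Your proof follows essentially the same route as the paper's: gaussify $n$ copies with a passive unitary $U_n$, apply Lemma~1 to the gaussified state, use the commutation of $U_n$ with $\Omega^{\otimes n}$ to identify the right-hand side with $n\left[S(\rho)-S(\Omega(\rho))\right]$, and pass to the limit using continuity of the entropy on energy-bounded (fixed-second-moment) sets. The only cosmetic difference is that you justify the commutation identity directly at the level of the channel's $(X,Y)$ data under the orthogonal mixing $O\otimes I$, whereas the paper argues via a Stinespring dilation with a Gaussian ancilla; both are valid.
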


\begin{proof}

Let the action of $\Omega$ over the displacement vector and covariance matrix be given by eq. (\ref{gauss_chan}). Since von Neumann entropies remain the same after a unitary transformation, without loss of generality we will assume that $\rho$'s displacement vector is null. Similarly, we will take $z=0$ in the channel description (\ref{gauss_chan}) of $\Omega$. Now, let $U_n$ be the $n$-system `Gaussification' transformation described in \cite{gaussification}. Calling $\tilde{\rho}^{(n)}=U_n\rho^{\otimes n}U_n^\dagger$, we have that

\bea
L_n &\equiv& \sum_{i=1}^nS(\tilde{\rho}_i)-S(\Omega(\tilde{\rho}_i)) \nonumber\\
&\geq& S(\tilde{\rho}^{(n)})-S(\Omega^{\otimes n}(\tilde{\rho}^{(n)}) )\noindent\\
&=& S(\rho^{\otimes n})-S(\Omega^{\otimes n}(\rho^{\otimes n}))\nonumber\\
&=&n\{S(\rho)-S(\Omega(\rho))\},
\label{inter1}
\eea

\noindent where the first inequality is due to Lemma 1 in the main text, and the equality follows from the fact that, for all states $\sigma$,

\be
U_n\Omega^{\otimes n}(\sigma)U_n^\dagger=\Omega^{\otimes n}(U_n\sigma U_n^\dagger).
\ee

\noindent This identity follows from three observations: 1) any Gaussian channel with $z=0$ is the result of applying a symplectic unitary $V_S$ over the target system and an ancillary Gaussian state $\omega$ with zero displacement vector; 2) $n$ copies of $\omega$ are invariant with respect to a Gaussification operation $U^A_n$; 3) $[V_S^{\otimes n},U_n']=0$, where $U_n'=U_n\otimes U^A_n$ represents the Gaussification of $n$ copies of the system target-ancilla.

Since the von Neumann entropy is continuous in trace norm with respect to collections of states with finite second moments, by \cite{gaussification}, we end up with

\bea
\lim_{n\to\infty}\frac{L_n}{n} &= &\lim_{n\to\infty}\frac{1}{n}\sum_{i=1}^nS(\tilde{\rho}_i)-S(\Omega(\tilde{\rho}_i))\nonumber\\
&=&S(\rho_G)-S(\Omega(\rho_G)).
\label{inter2}
\eea

\noindent The lemma hence follows from (\ref{inter1}) and (\ref{inter2}).

\end{proof}

Now, let $\rho$ be an arbitrary state with finite energy (and thus finite first and second moments), and let $\rho_G$ be the Gaussian state with the same first and second moments. Then, $\rho$ and $\rho_G$ have the same average energy, and since $\Omega$ is a Gaussian channel the same is true for $\Omega(\rho)$ and $\Omega(\rho_G)$. However, by the previous Lemma, the entropic change is bigger for $\rho_G$, and so $\Delta F(\rho_G,\Omega)\geq \Delta F(\rho,\Omega)$, proving Proposition 2.

\bibliography{references}
\end{document}